\newtheorem{theorem}{Theorem}
\newtheorem{corollary}{Corollary}
\newtheorem{lemma}{Lemma}
\newtheorem{claim}{Claim}
\def \bclm {\begin{claim}} 
\def \eclm {\end{claim}} 
\def \bitem {\begin{itemize}} 
\def \eitem {\end{itemize}} 
\def \bprf {\begin{proof}} 
\def \eprf {\end{proof}} 
\newcommand{\problemdefn}[3]{
    \begin{center}
    \noindent
    \framebox{
        \begin{minipage}{4.6in}
            \textbf{\sc #1} \\
            \emph{Input}: #2 \\
            \emph{Output}: #3
        \end{minipage}
    }
    \end{center}
} 
\def \call {{\cal L}} 
\def \cals {{\cal S}}
\begin{document}
\title{Approximating MIS over equilateral $B_1$-VPG 
graphs\footnote{Journal version of a part of the {\em preliminary} version presented at COCOA-2015 \cite{lahiri}.} }
\author{Abhiruk Lahiri\thanks{Indian Institute of Science, Bangalore, India; 
abhiruk.lahiri@csa.iisc.ernet.in}
 ~~~~ 
Joydeep Mukherjee\thanks{The Institute of Mathematical Sciences, HBNI, Chennai, India;  
joydeepm@imsc.res.in} 
 ~~~ 
C.R. Subramanian\thanks{
The Institute of Mathematical Sciences, HBNI, Chennai 600113, India;  
crs@imsc.res.in }} 

\maketitle

\begin{abstract}

We present an approximation algorithm for the maximum independent set 
(MIS) problem over the class of equilateral $B_1$-VPG graphs. These are 
intersection graphs of $L$-shaped planar objects 
with both arms of each object being equal. 
We obtain a $36(\log 2d)$-approximate algorithm running in 
$O(n(\log n)^2)$ time for this problem, 
where $d$ is the ratio $d_{max}/d_{min}$ and $d_{max}$ and $d_{min}$ denote 
respectively the maximum and minimum length of any arm in the input 
equilateral $L$-representation of the graph. 
In particular, we obtain 
$O(1)$-factor approximation of MIS for $B_1$-VPG -graphs for which the 
ratio $d$ is bounded by a constant. 
In fact,  algorithm can be generalized to an $O(n(\log n)^2)$ time and 
a $36(\log 2d_x)(\log 2d_y)$-approximate MIS algorithm over arbitrary $B_1$-VPG graphs.  
Here, $d_x$ and $d_y$ denote respectively the analogues of $d$ when restricted to 
only horizontal and vertical arms of members of the input. 
This is an improvement over the previously best $n^\epsilon$-approximate algorithm 
\cite{FoxP} (for some fixed $\epsilon>0$), unless the ratio $d$ is exponentially 
large in $n$. In particular, $O(1)$-approximation of MIS is achieved for graphs 
with $\max\{d_x,d_y\}=O(1)$. 
\end{abstract}

\noindent{\bf Keywords :} approximation algorithms, intersection graphs, independent sets

 \section{Introduction} 
 The problem of computing an approximation to a maximum independent set ({\em MIS}) of 
 an arbitrary graph is notoriously hard. It is known \cite{hastad1999} that, 
 for every fixed $\epsilon > 0$, MIS cannot 
 be efficiently approximated (unless $NP=ZPP$) within a multiplicative factor of $|V|^{1-\epsilon}$ for an 
 arbitrary $G=(V,E)$. 
Naturally, there have been algorithmic studies of this problem on special classes of graphs. 
One such graph class is denoted by $B_1$-VPG. 

\vspace{2mm} 

{\em Vertex intersection graphs of Paths on Grid} {thm:aprx-mis-b1-vpg-gen} 
(or, in short, VPG graphs) were first introduced 
by Asinowski et. al. \cite{Asin}. 
For a member of this class of graphs, its vertices represent 
paths joining grid-points on a rectangular planar grid and two such 
vertices are adjacent if and only if the corresponding paths intersect 
in a grid-point. 
In particular, $B_1$-VPG graphs denotes the class of 
intersection graphs of paths on a grid where each path has one of the 
following shapes: $\llcorner, \ulcorner, \urcorner$ and $\lrcorner$, commonly 
referred to as an $l$.
By an \emph{arm} of an $l$, we mean either a horizontal or a vertical line segment
associated with $l$. An $l$ is said to be {\em equilateral} if its vertical and horizontal 
arms are of equal length $le(l)$. 
In this paper, we focus on equilateral $B_1$-VPG graphs formed 
by equilateral $l$'s. For a set $\call$ of equilateral $l$'s, we denote by $d_{max}(\call)$ 
and $d_{min}(\call)$ the maximum and minimum values of $le(l)$ over $l \in \call$. 

\vspace{2mm} 

VPG graphs are a special type of \emph{string graphs}, which are 
intersection graphs of simple curves in the plane \cite{Asin}. 
The best known approximation 
algorithm for MIS on string graphs is only known to 
have a guarantee of $n^{\epsilon}$, for some $\epsilon > 0$ \cite{FoxP}. 
There are few subclasses of string graphs like 
outerstring graphs, planar graphs, $B_1$-EPG, etc., classes which admit 
efficient computation of either a MIS or a constant factor approximation 
of a MIS \cite{bakerJACM1994,epst,keil17}. 
Recently, in \cite{vpgipl17}, 
we presented a $O(n(\log n)^3)$-time, $4(\log n)^2$-approximation algorithm 
for MIS over $B_1$-VPG graphs on $n$ vertices.

  ~~ 

\noindent{\bf Our Results : } 
In this paper, we present  
new approximation algorithms for the class of equilateral 
$B_1$-VPG graphs. 
The decision version of this problem is 
NP-complete even if restricted to instances where the arms of 
of all $l$'s are of equal length (see Theorem $5$ in \cite{lahiri}). 
Throughout the paper, 
we assume that the input is in the form of a set $\call$ of equilateral $l$'s. 
Precisely, we obtain the following results.
\begin{theorem} \label{thm:aprx-mis-b1-vpg} 
 There exists an $O(n(\log n)^2)$-time and $36(\log 2d)$-approximation 
 algorithm for 
MIS restricted to equilateral $B_1$-VPG graphs, where 
$d=d(\call)=d_{max}(\call)/d_{min}(\call)$. 
\end{theorem}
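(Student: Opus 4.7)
The plan is to combine a scale-decomposition step that loses a factor of $\log(2d)$ in the approximation ratio with an $O(1)$-approximation for the resulting bounded-scale subproblem. First, I would partition $\call$ into $k = \lceil \log_2(2d) \rceil$ buckets $\call_0,\dots,\call_{k-1}$, where $\call_j := \{\, l \in \call : 2^{j} d_{min}(\call) \le le(l) < 2^{j+1} d_{min}(\call)\,\}$. For any independent set $I^*$ of the intersection graph, the pigeonhole principle guarantees $|I^* \cap \call_j| \ge |I^*|/k$ for some $j$, so computing a $c$-approximate MIS within each bucket and returning the largest of these outputs yields a $c k$-approximation overall. It remains to establish $c=36$ for the case that all arm lengths lie in some interval $[s, 2s)$.

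Within such a bucket I would first split the family by the four orientations of the $L$-shape ($\llcorner, \ulcorner, \urcorner, \lrcorner$), keeping only the largest of the four resulting subfamilies and losing a factor of $4$. Within a fixed-orientation subfamily of bounded scale, the plan is to apply a $3 \times 3$ grid-shifting argument: overlay an axis-aligned square grid of cell side $3s$, and consider the nine translates of the grid obtained by shifting the origin to $(i s, j s)$ for $i,j \in \{0,1,2\}$. Since each $L$'s bounding box has side less than $2s$, a short counting argument on $s$-spaced boundary positions shows that, in each coordinate, at most $2$ of the $3$ shifts cut the bounding box, and hence for every $L$ at least one of the nine translates places it entirely inside a single cell. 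For each translate, solve MIS restricted to those $L$s contained in some cell (a problem that decomposes into independent cell-wise subproblems) and return the best of the nine outputs; an averaging argument over $I^*$ then yields a solution of size at least $|I^*|/9$ for the subfamily. Composing the $4$-way orientation loss with the $9$-way shift loss gives $c = 4 \cdot 9 = 36$ and an overall ratio of $36 \log(2d)$.

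The per-cell subproblem is MIS on a collection of same-orientation equilateral $L$s with arms in $[s,2s)$ all contained in a single $3s \times 3s$ window. I would solve it \emph{exactly} in $O(n_c (\log n_c)^2)$ time via a plane sweep that processes the $L$s in sorted order of corner coordinate, maintaining a dynamic-programming table indexed by a compact ``profile'' of previously chosen $L$s, with balanced binary search trees supporting the range queries needed to decide compatibility in $O(\log n_c)$ time each. Summed across cells, shifts, orientations, and buckets, the total running time is $O(n (\log n)^2)$. The hardest step, and the main obstacle, is this per-cell exact algorithm: even in a bounded window with a single orientation and comparable arm lengths, equilateral $L$s can be arranged in ``staircase'' configurations of unbounded size (any strictly monotone sequence of corner coordinates yields pairwise non-crossing $L$s), so the sweep-based DP must identify a state that is compact enough for efficiency yet precise enough to correctly decide compatibility with each newly encountered $L$.
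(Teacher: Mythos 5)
Your outer decomposition matches the paper's: a scale partition into $O(\log 2d)$ buckets, a $4$-way orientation split, and a $9$-way grid argument, composing to the claimed $36\log(2d)$ factor. But the proof has a genuine gap exactly where you flag it: the per-cell exact MIS algorithm. You leave it as a ``plane sweep with a compact profile'' without specifying the DP state or proving that such a state exists, and you correctly observe that unbounded staircase configurations make this nontrivial. Worse, your choice of cell geometry makes the subproblem harder than it needs to be: in a $3s\times 3s$ cell, the corners of the contained $L$'s can be spread over a range of width about $2s$ while arms have length as small as $s$, so two $L$'s in the same cell can have disjoint $x$-projections (and disjoint $y$-projections); there is then no common vertical or horizontal line stabbing all of them, and no special structure is evident that would support an efficient exact algorithm.

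The paper's resolution is to make the cells \emph{small relative to the arms} rather than large: it uses boxes of side $2^i$ (the lower bound on arm length in bucket $i$) and classifies each $L$ by the box containing its \emph{corner}, grouping boxes by (row mod $3$, column mod $3$). Since every arm has length at least the box side, every $L$ anchored in a box crosses both the vertical line bounding that box on the right and the horizontal line bounding it above; by Lemma~\ref{lem:vhpermgraph} the $L$'s anchored in a single box induce a \emph{permutation graph}, on which MIS is solvable exactly in $O(q\log q)$ time (Theorem~\ref{thm:permgrMIS}). Since arms are shorter than $2\cdot 2^i$, $L$'s anchored in distinct boxes of the same residue class cannot intersect, so the class decomposes box-wise and MIS is exact for each of the $9\lfloor\log 2d\rfloor$ classes. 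If you replace your $3s$-cell containment argument with this corner-anchoring at scale $s$ and invoke the permutation-graph lemma, your proof closes; as written, the central algorithmic step is missing.
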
 
In particular, for all 
equilateral $\call$ with $d(\call)=O(1)$, 
the algorithm of Theorem \ref{thm:aprx-mis-b1-vpg} 
yields an $O(1)$-approximation of MIS. 
Also, when this is result combined with the approximation algorithm 
of \cite{vpgipl17}, we obtain a slightly slower $O(n(\log n)^3)$-time and 
$\min\{4(\log n)^2, 36(\log 2d)\}$-approximation algorithm 
for MIS over equilateral $B_1$-VPG graphs. 

  ~~~~ 

When all members have a uniformly common arm length, 
the following corollary can be inferred by slightly 
modifying the proof arguments 
of Theorem \ref{thm:aprx-mis-b1-vpg}. We provide a brief sketch of 
of this modification in Section \ref{sec:proofsofcorollaries}.  
\begin{corollary} \label{cor:mis-aprx-unit-b1-vpg} 
 There exists an $O(n(\log n)^2)$-time and $16$-approximation 
 algorithm for computing a MIS over equilateral $B_1$-VPG graphs 
 formed by sets of $l$'s of uniformly common arm lengths. 
\end{corollary}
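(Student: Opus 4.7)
The plan is to revisit the proof of Theorem \ref{thm:aprx-mis-b1-vpg} and identify which factors collapse under the stronger hypothesis that every $l \in \call$ has a common arm length $D$. I anticipate that the bound $36(\log 2d)$ of Theorem \ref{thm:aprx-mis-b1-vpg} decomposes as the product of (i) a $\Theta(\log 2d)$ overhead incurred by a dyadic partitioning of $\call$ into layers according to which interval $[2^iD_{\min},\, 2^{i+1}D_{\min})$ the arm length $le(l)$ falls in, and (ii) a constant per-layer approximation factor applied within each such layer. In the uniform-length case this dyadic partition has only one non-empty layer, so the $\log 2d$ factor is eliminated outright and only the constant survives.

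The second step is to tighten this constant from $36$ to $16$ by exploiting congruence of all arms. I would overlay the axis-parallel grid of lines $x = iD,\ y = jD$ ($i,j \in \mathbb{Z}$), and color each $l \in \call$ by the parity pair $\big(\lfloor x(l)/D\rfloor \bmod 2,\ \lfloor y(l)/D\rfloor \bmod 2\big)$ of its corner. This yields $4$ color classes. Because every arm has length exactly $D$, each $l$ is contained in a $2D \times 2D$ bounding box covering four cells, and within a single color class the corners of distinct $l$'s are separated by at least $2D$ in one of the coordinates, making intersections within a class local to a constant-size window. This is precisely the place where the $3 \times 3$ spatial coloring used within a layer of Theorem \ref{thm:aprx-mis-b1-vpg} (needed to accommodate lengths ranging up to a factor of $2$) can be coarsened to a $2 \times 2$ coloring, replacing one factor of $9$ by a factor of $4$.

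Within each of the $4$ color classes I would then invoke the base subroutine from the proof of Theorem \ref{thm:aprx-mis-b1-vpg} (essentially a sweep that decouples horizontal and vertical arm interactions and approximates each of the resulting interval-like structures), which, stripped of the multiscale recursion, is expected to deliver a $4$-approximation in $O(n(\log n)^2)$ time. Returning the best of the four class-wise solutions costs an additional factor of $4$, producing the overall $4 \cdot 4 = 16$ approximation. The coloring itself is $O(n)$, so the running time is dominated by the sweep and stays at $O(n(\log n)^2)$.

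The main obstacle is the accounting, rather than any new idea: I need to (a) pinpoint the exact step in the proof of Theorem \ref{thm:aprx-mis-b1-vpg} where the multiscale grouping contributes the $\log 2d$ factor, and verify that removing this step does not affect the remaining analysis; and (b) confirm that the base sweep, restricted to $l$'s of uniform arm length partitioned by the coarser $2 \times 2$ coloring, retains its $4$-approximation guarantee. Once both points are verified, the rest of the argument carries over verbatim, yielding the claimed $O(n(\log n)^2)$-time $16$-approximation.
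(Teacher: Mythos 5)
Your high-level plan---drop the dyadic length layers (only one survives when all arm lengths equal $D$) and coarsen the $3\times 3$ grid coloring to a $2\times 2$ coloring---is exactly what the paper does. But your accounting of the constant $16$ is wrong, and the error hides a genuine gap. The paper's $36$ decomposes as $4\times 9$: a factor of $4$ from Claim \ref{clm:Lsuffices} (reducing the four shapes $L_1,\ldots,L_4$ to a single $L$-graph), times a factor of $9$ from the $3\times3$ coloring; within each color class the MIS is computed \emph{exactly}, contributing no loss. In the uniform case this becomes $4\times 4=16$. You instead drop the shape-reduction factor entirely and replace it with a purported ``base subroutine'' that ``delivers a $4$-approximation'' within each color class. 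No such $4$-approximation subroutine exists in the paper, and you give no argument for one. If such a lossy subroutine were really needed, the correct total would be $4\,(\text{shapes})\times 4\,(\text{coloring})\times 4\,(\text{subroutine})=64$, not $16$; your two mistakes cancel numerically but not logically.

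The missing idea is the structural fact that makes the within-class computation exact. Within one color class $S_{k_r,k_c}$, any two $l$'s lying in distinct grid cells are independent (cells of the same parity class are separated by a full cell of side $D$, and each arm has length exactly $D$), so the class splits into a disjoint union of per-cell subproblems. Each $l$ in cell $(r,c)$ crosses both the vertical line bounding column $c+1$ on the left and the horizontal line bounding row $r+1$ from below, so by Lemma \ref{lem:vhpermgraph} each cell induces a permutation graph, on which MIS is solved exactly in $O(q\log q)$ time via Theorem \ref{thm:permgrMIS}. That exactness is what lets the per-class factor be $1$ rather than $4$, and the factor of $4$ you are missing must then be restored via Claim \ref{clm:Lsuffices}. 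Your description of within-class intersections as ``local to a constant-size window'' gestures at the cell decomposition but never identifies the permutation-graph structure that the exact algorithm relies on.
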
 
It is easy to see from the description of the algorithm and its 
analysis that the algorithm can be suitably generalized to 
obtain an approximate MIS algorithm over {\em arbitrary} $B_1$-VPG 
graphs. Precisely, we obtain the following theorem. A brief sketch of 
its proof is provided in Section \ref{sec:03}. 
\begin{theorem} \label{thm:aprx-mis-b1-vpg-gen} 
 There exists an $O(n(\log n)^2)$-time and $36(\log 2d_x)(\log 2d_y)$ 
 approximation algorithm for 
MIS restricted to $B_1$-VPG graphs, where 
$d_x=d_{max}^x(\call)/d_{min}^x(\call)$, 
$d_y=d_{max}^y(\call)/d_{min}^y(\call)$. $d_{max}^x(\call)$ is the maximum 
length of the horizontal arm of any member of $\call$. $d_{min}^x$, $d_{max}^y$ and 
$d_{min}^y$ are similarly defined. 
\end{theorem}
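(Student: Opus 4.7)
\medskip

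\noindent\textbf{Proof plan for Theorem \ref{thm:aprx-mis-b1-vpg-gen}.}
The plan is to reduce the general $B_1$-VPG instance to a collection of sub-instances, each of which is ``essentially uniform'' in both the horizontal and vertical arm lengths, run the base algorithm underlying Theorem~\ref{thm:aprx-mis-b1-vpg} on every sub-instance, and output the best of the computed independent sets. The logarithmic factors $\log 2 d_x$ and $\log 2 d_y$ arise only from this bucketing step.

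First, I would partition $\call$ into two-dimensional dyadic classes. For each $l \in \call$, write $h(l)$ for the length of its horizontal arm and $v(l)$ for the length of its vertical arm. For $i \in \{1,\ldots,\lceil \log 2 d_x \rceil\}$ and $j \in \{1,\ldots,\lceil \log 2 d_y \rceil\}$, define
\[
\call_{i,j} \;=\; \bigl\{\, l \in \call \;:\; 2^{i-1} d_{min}^x \le h(l) < 2^{i} d_{min}^x,\; 2^{j-1} d_{min}^y \le v(l) < 2^{j} d_{min}^y \,\bigr\}.
\]
Every $\call_{i,j}$ is a $B_1$-VPG instance in which the horizontal arm-length ratio is at most $2$ and the vertical arm-length ratio is at most $2$. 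In particular, each $\call_{i,j}$ is a ``doubly-uniform'' instance of the same qualitative type as the $d \le 2$ case of the equilateral setting.

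Next, I would invoke on each $\call_{i,j}$ a variant of the algorithm of Theorem~\ref{thm:aprx-mis-b1-vpg}. The key step in its analysis is to show that, restricted to inputs where both arm-length ratios are bounded by $2$, this algorithm returns a $36$-approximate MIS. The geometric core of the equilateral argument is a counting lemma that upper bounds, in terms of arm-to-cell length ratios, the number of pairwise non-intersecting $L$-shapes that can be stabbed by a fixed axis-aligned reference region. Because that lemma depends only on the ratios $h(l)/(\text{cell side})$ and $v(l)/(\text{cell side})$ separately, it extends verbatim once horizontal and vertical arms are treated independently; the only change is that the reference regions may have to be chosen as rectangles rather than squares, scaled according to the common dyadic scales of the current bucket. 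The bucket-internal $36$-approximation then follows by the same coloring/partition argument as in the equilateral proof. Finally, output the largest of the $\lceil \log 2 d_x \rceil \lceil \log 2 d_y \rceil$ candidate solutions: since the bucket that captures the largest slice of an optimal independent set already contains a $1/(\lceil \log 2 d_x \rceil \lceil \log 2 d_y \rceil)$ fraction of $|\mathrm{OPT}|$, we obtain the claimed $36(\log 2 d_x)(\log 2 d_y)$ factor. The running time is $O(n (\log n)^2)$: bucketing costs $O(n \log n)$, and the base algorithm invocations together cost $\sum_{i,j} O(|\call_{i,j}| \log^2 |\call_{i,j}|) = O(n (\log n)^2)$ since the buckets partition $\call$.

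The main obstacle will be Step~2, namely the faithful transfer of the equilateral geometric arguments to the ``rectangular'' setting in which horizontal and vertical scales are decoupled. I expect the underlying counting and partitioning lemmas to go through without structural change, with at most the constants in the analysis requiring re-verification; the fact that the quoted constant in the statement is again $36$ strongly suggests that the authors' analysis already carries the required slack. The remaining details (partitioning, choice of rectangular reference regions, and bookkeeping of ratios within a bucket) are mechanical extensions of the equilateral proof.
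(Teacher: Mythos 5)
Your plan coincides with the paper's proof: the paper likewise buckets $\call$ into sets $S_{i,j}$ by dyadic horizontal and vertical arm lengths, imposes on each bucket a rectangular grid with spacings $2^i$ and $2^j$, splits each bucket into nine residue classes mod $3$ in rows and columns, solves each piece and returns the best solution, yielding the factor $4\cdot 9\cdot(\log 2d_x)(\log 2d_y)=36(\log 2d_x)(\log 2d_y)$ in $O(n(\log n)^2)$ time. One correction to your description of the ``geometric core'': it is not a counting lemma bounding the number of non-intersecting $L$'s stabbed by a reference region, but the observation that all $L$'s whose corners lie in one grid cell cross a common vertical and a common horizontal line and hence induce a permutation graph (Lemma \ref{lem:vhpermgraph}), on which MIS is solved \emph{exactly} in $O(p\log p)$ time, while $L$'s lying in distinct cells of the same mod-$3$ class are automatically independent because each arm is shorter than twice the corresponding cell side; both facts transfer verbatim to rectangular cells, as you anticipated.
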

When combined with the approximation algorithm of 
\cite{vpgipl17}, this yields an $O(n(\log n)^3)$-time and 
$\min\{4(\log n)^2, 36(\log 2d_x)(\log 2d_y)\}$ approximation algorithm 
for MIS over arbitrary $B_1$-VPG graphs. 
In particular, for $B_1$-VPG graphs having $d_x,d_y=O(1)$, we obtain 
an efficient, $O(1)$-approximation algorithm for MIS. 
To the best of our knowledge, no such $O(1)$-factor approximation of MIS is known for any class of 
$B_1$-VPG graphs. 
We also infer the following corollary 
by slightly modifying the proof arguments of Theorem \ref{thm:aprx-mis-b1-vpg-gen}. 
A sketch of this modification is provided in Section \ref{sec:proofsofcorollaries}
\begin{corollary} \label{cor:mis-aprx-unit-b1-vpg-gen} 
 There exists an $O(n(\log n)^2)$-time and $16$-approximation 
 algorithm for computing a MIS over {\em arbitrary} $B_1$-VPG graphs 
 formed by sets of $l$'s having a uniformly common arm length for {\em each} of the 
 horizontal and vertical arms of all members $\call$. The vertical and horizontal arm lengths 
 may however differ for any $l$ in the input $\call$. 
\end{corollary}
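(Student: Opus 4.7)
The plan is to reuse the algorithm and analysis of Theorem \ref{thm:aprx-mis-b1-vpg-gen}, observing that the two logarithmic factors $\log 2d_x$ and $\log 2d_y$ arise exclusively from a preprocessing step which bins the members of $\call$ into length-classes. In the proof of Theorem \ref{thm:aprx-mis-b1-vpg-gen} one partitions $\call$ into $O(\log 2 d_x)$ classes according to the dyadic range of each horizontal arm length, and independently into $O(\log 2 d_y)$ classes for vertical arms, yielding $O((\log 2d_x)(\log 2d_y))$ combined classes on each of which a constant-factor subroutine is run; the best of these independent sets is then returned, which incurs the product of $\log$-factors in the final ratio.

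In the present setting every horizontal arm of $\call$ has a common length $d_x^*$ and every vertical arm has a common length $d_y^*$, so $\call$ already consists of a single combined length-class and the binning step is unnecessary. Running the constant-factor subroutine directly on $\call$ therefore removes both $\log$-factors and gives a constant-factor approximation without any asymptotic overhead in running time; the bound $O(n(\log n)^2)$ follows since we merely skip the classification while retaining the same sweep and data-structure costs used within a single class.

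To sharpen the constant from $36$ down to $16$, I would re-examine the subroutine's internal analysis. In the generic class of Theorem \ref{thm:aprx-mis-b1-vpg-gen} the horizontal arm lengths may vary by a factor of up to $2$, and similarly for vertical arms; this slack forces the grid cells used by the subroutine to be chosen large enough to accommodate the longest arms, which inflates the constant. Under exact uniformity the cell dimensions can be set tightly to a fixed multiple of $d_x^*$ and $d_y^*$, and a standard shifting argument loses at most a factor of $4$ per coordinate axis, giving the final $16 = 4 \cdot 4$.

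The principal obstacle is verifying that each uniform $l$ meets only a bounded number of adjacent cells of this tight grid and that the per-cell subproblem admits an exact (or obviously $O(1)$-approximable) solution; once this is established, comparing against the best of the four shift positions per axis and taking the larger of the resulting independent sets yields the stated $16$-approximation. A careful accounting of these shift choices should reproduce the constant claimed in the corollary.
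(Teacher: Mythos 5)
Your high-level idea --- collapse the dyadic binning to a single length class when both arm lengths are uniform, so that both logarithmic factors disappear while the running time is unchanged --- is exactly the paper's starting point. However, there are two genuine gaps in how you then reach the constant $16$. Your accounting $16 = 4 \cdot 4$ attributes a factor of $4$ to a shifting argument \emph{per coordinate axis} and never mentions the four shapes $\llcorner, \ulcorner, \urcorner, \lrcorner$. In the paper the two factors of $4$ come from entirely different places: one is the reduction from arbitrary $B_1$-VPG graphs to single-shape ($L_1$) graphs via Claim \ref{clm:Lsuffices}, and the other is the partition of the grid boxes into the $2 \times 2 = 4$ residue classes $(r' \bmod 2,\, c' \bmod 2)$ --- mod $2$ suffices here, rather than the mod $3$ of Theorem \ref{thm:aprx-mis-b1-vpg-gen}, precisely because an arm whose length equals the box side can reach only into the adjacent box, not two boxes over. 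If you truly lose $4$ per axis you get $16$ for a single shape and $64$ overall; if you lose $2$ per axis you get $4$ for a single shape and still need Claim \ref{clm:Lsuffices} to handle all four shapes. As written, your decomposition does not close to $16$ for arbitrary $B_1$-VPG inputs.

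The second gap is that you explicitly defer ``the principal obstacle'': showing the per-cell subproblem is exactly solvable. This is the heart of the argument, not a verification detail. The paper resolves it by noting that every $l$ whose corner lies in box $(r,c)$ crosses both the vertical line forming the right border and the horizontal line forming the top border of that box (since its arm lengths equal the respective box side lengths), so by Lemma \ref{lem:vhpermgraph} the $l$'s in a single box induce a permutation graph, whose MIS is computable in $O(q \log q)$ time by Theorem \ref{thm:permgrMIS}; moreover, boxes in the same residue class are pairwise far enough apart that their $l$'s are mutually independent, so the per-class MIS is just the union of the per-box MIS's. Without this step you have neither the exactness needed for the factor-$4$ loss per shape nor the claimed $O(n(\log n)^2)$ bound.
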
 

We introduce some conventions and notations in Section \ref{sec:02}. 
In Section \ref{sec:03}, we present the MIS approximation 
algorithm and its analysis for equilateral $B_1$-VPG graphs. 
This constitutes the proof of Theorem \ref{thm:aprx-mis-b1-vpg}. In 
Section \ref{sec:04}, we present a sketch of the generalization of 
the approach of Section \ref{sec:03} to 
arbitrary $B_1$-VPG graphs and its analysis. This constitutes the 
proof of Theorem  \ref{thm:aprx-mis-b1-vpg-gen}. 
In Section \ref{sec:proofsofcorollaries}, we provide a sketch of 
the proof arguments of Corollaries \ref{cor:mis-aprx-unit-b1-vpg} and 
\ref{cor:mis-aprx-unit-b1-vpg-gen}. 
In Section \ref{sec:05}, we conclude with some remarks.

\section{Preliminaries}
\label{sec:02}
We work with geometric shapes $\llcorner, \ulcorner, \urcorner$ and $\lrcorner$. 
For ease of further discussion, we refer to them as follows. $L_1$ refers to the 
shape $\llcorner$, $L_2$ refers to $\ulcorner$,$L_3$ refers to $\urcorner$ and $L_4$ to 
$\lrcorner$.
Henceforth, we use $l$ to denote a geometric object with one of the 
four shapes $L_1,L_2,L_3$ and $L_4$. 

The \emph{corner} of an $l$ is defined to be the point where the two arms meet 
and is denoted by $c_l$, the tip of the 
horizontal arm is denoted by $h_l$ and that of the vertical arm is 
denoted by $v_l$. For an object $l$, we use $(x_c,y_c,x_h,y_v)$ to 
denote respectively the $x$- and $y$- coordinates of $c_l$, 
the $x$- coordinate of $h_l$ and the $y$- coordinate of $v_l$. 
This 4-tuple completely and uniquely describes $l$.
The set of points constituting $l$ is denoted by $P_l$ 
and is given by (when $l$ is of shape $L_1$) 
$$P_{l} = \{(x,y_c) \; : \; x_c \leq x \leq x_h\} \cup 
\{(x_c,y) \; : \; y_c \leq y \leq y_v\}.$$ 
We say that two distinct objects $l_1$ and $l_2$ intersect if 
$P_{l_1}\cap P_{l_2} \neq \emptyset$.  Given a set $\call$ of $l$'s, the 
intersection graph $G$ formed by $\call$ is defined to be 
$G=(\call, E)$ where $E$ consists of all those unordered pairs 
$(l_1,l_2)$ such that $l_1$ and $l_2$ intersect. 
A set of $l$'s such that no two of them form an 
intersecting pair is said to be an \emph{independent} set.

For each $1 \leq i \leq 4$, we refer to an intersection graph formed by 
objects  each of shape $L_i$ as a $L_i$-graph. 
For ease of description, we refer to a $L_1$-graph as a $L$-graph. 
By symmetry (based on rotations), one 
can adapt any efficeint and exact/approximate MIS algorithm for $L_1$-graphs to 
a similar algorithm (with same time and approximation guarantee) for $L_i$-graphs,  
for every $i$.  This enables us to focus only on $L_1$-graphs (at the cost of 
increasing the approximation guarantee by a multiplicative factor of 4) as is established 
by the following Claim \ref{clm:Lsuffices}.  
The claim is essentially a formal statement (in the context of equilateral $B_1$-VPG graphs) 
of an approach that has been employed in \cite{epst} for $B_1$-EPG graphs. 
Let $\alpha(G)$ denote the size of a MIS of graph $G$ and 
 let $A(G)$ denote the size of an independent set of $G$ returned by an 
algorithm $A$.
\begin{claim} \label{clm:Lsuffices} 
If there is an efficient algorithm $A$ which approximates MIS over equilateral $L$-graphs 
within a multiplicative factor of $c(d(\call))$ $($for some increasing function $c(d))$, 
then there is an efficient algorithm $B$ which approximates MIS over 
equilateral $B_1$-VPG graphs within a multiplicative factor 
of $4c(d(\call))$.  Throughout, $\call$ denotes the input to the corresponding algorithm. 
\end{claim}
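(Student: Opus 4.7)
The plan is to use the standard partition-by-shape argument. Given an input set $\call$ of equilateral $l$'s forming a $B_1$-VPG graph $G$, algorithm $B$ will first partition $\call$ into four subsets $\call_1, \call_2, \call_3, \call_4$, where $\call_i$ consists of all members of $\call$ of shape $L_i$. Each $\call_i$ induces an $L_i$-graph $G_i$, and after rotating the plane by the appropriate multiple of $90$ degrees (which preserves intersections, equilaterality, and the arm-length ratio), each $G_i$ can be viewed as an $L$-graph with the same value of $d$. Algorithm $B$ then invokes $A$ on each of the four (rotated) subinstances to obtain independent sets $I_1, I_2, I_3, I_4$ and returns the largest of the four.

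Efficiency is immediate: the partition takes linear time, each call to $A$ is efficient by hypothesis, and the maximum is taken over four values. For the approximation guarantee, let $I^*$ be an optimum independent set of $G$, so $|I^*| = \alpha(G)$. Setting $I^*_i = I^* \cap \call_i$, the sets $I^*_i$ partition $I^*$ and each $I^*_i$ is an independent set in $G_i$, so $\sum_i \alpha(G_i) \geq \sum_i |I^*_i| = \alpha(G)$. By averaging, there exists an index $j$ with $\alpha(G_j) \geq \alpha(G)/4$. Applying the approximation guarantee of $A$ to $G_j$ (after rotation) yields
\[
|I_j| \;\geq\; \frac{\alpha(G_j)}{c(d(\call_j))} \;\geq\; \frac{\alpha(G)}{4\, c(d(\call_j))}.
\]
Since $\call_j \subseteq \call$, we have $d_{\min}(\call_j) \geq d_{\min}(\call)$ and $d_{\max}(\call_j) \leq d_{\max}(\call)$, hence $d(\call_j) \leq d(\call)$; monotonicity of $c$ then gives $c(d(\call_j)) \leq c(d(\call))$. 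Therefore $\max_i |I_i| \geq |I_j| \geq \alpha(G)/(4\, c(d(\call)))$, which is exactly the $4c(d(\call))$-approximation claimed.

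The argument is essentially bookkeeping, so the only substantive point to make explicit is that algorithm $A$, designed for $L_1$-graphs, can be legitimately applied to $L_i$-graphs for $i=2,3,4$ via the rotational symmetry of the $l$-shapes, and that this rotation preserves the parameter $d$; once that is noted, the pigeonhole step and monotonicity of $c$ complete the proof. No serious obstacle is expected.
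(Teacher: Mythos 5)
Your proof is correct and follows essentially the same route as the paper's: partition $\call$ by shape into $\call_1,\dots,\call_4$, run $A$ on each (using rotational symmetry to reduce $L_i$-graphs to $L$-graphs, which the paper notes just before the claim), and combine $\alpha(G)\le\sum_i\alpha(G_i)$ with $d(\call_i)\le d(\call)$ and the monotonicity of $c$ to get the factor $4c(d(\call))$. Your explicit remark that rotation preserves equilaterality and the ratio $d$ is a welcome clarification but does not change the argument.
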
 
Hence, from now on, we focus only on the subclass of equilateral $L$-graphs.   

All logarithms used below are with respect to base 2. We denote a set $\{1,2,\ldots,n\}$ by $[n]$.
A permutation over $[n]$ is any bijection 
$\pi :[n] \rightarrow [n]$. An inversion of $\pi$ is any 
unordered pair $(i,j)$ satisfying $(i-j)(\pi^{-1}(i)-\pi^{-1}(j))<0$. 
The graph $G_{\pi}$ associated with $\pi$ is defined as $G_{\pi}=([n],E)$ 
where $E$ is set of all inversions of $\pi$. A permutation graph on 
$n$ vertices is any graph which is isomorphic to $G_{\pi}$ for some permutation 
$\pi$. 

\section{MIS-approximation over equilateral VPG-graphs}
\label{sec:03}

In this section, we prove Theorem \ref{thm:aprx-mis-b1-vpg} by designing an approximation 
algorithm for the following problem and then applying Claim \ref{clm:Lsuffices}. 

\def \logtd {{(\lfloor \log 2d \rfloor)}} 

\problemdefn{Maximum Independent Set over equilateral $L$-graphs}
{A set $\mathcal{L}$ of $L_1$-shaped equilateral $l$'s }{a set $I \subset \mathcal{L}$  such that $I$ is independent and $|I|$ is 
maximized.}

\noindent 
Before we present an approximation algorithm for 
this problem, 
we present the following claim (proved and employed in \cite{vpgipl17}). 
\begin{claim} \label{clm:assum} 
Without loss of generality, we can assume that : \\
 $l_1.x_c \neq l_2.x_c$ and $l_1.y_c \neq l_2.y_c$ for any pair of 
distinct $l_1,l_2 \in \cals$, where $\call$ is the input set of $L_1$-shaped 
$l$'s.  
\end{claim}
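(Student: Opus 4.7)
The plan is to prove the claim by a standard perturbation argument: starting from $\call$, produce an arbitrarily close representation of the same intersection graph in which all corner $x$-coordinates and all corner $y$-coordinates are pairwise distinct. Since a perturbed representation realises the same graph, working with it is indeed without loss of generality.

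Since $\call$ is finite, let $\epsilon > 0$ be a lower bound on both (i) the gap separating any disjoint pair of members of $\call$ and (ii) the length of the overlap between any pair of intersecting members whose intersection has positive length. Pick $\eta$ with $0 < \eta \ll \epsilon/n$. The construction will proceed in two stages. First, I would extend each of the two arms of every $l \in \call$ by $\eta$ at its free end, keeping the corner fixed; this maintains the equilateral $L_1$ shape, it converts every tangential (single-point) intersection into a robust overlap of positive length, and it introduces no new edges since $\eta \ll \epsilon$. A short case analysis confirms that tangential intersections can involve only a tip of one arm; any coincidence of corners between two $L_1$-shapes already yields a positive-length overlap, because both arms depart the shared corner in the same directions. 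Second, I would translate each $l$ by a perturbation vector $(\delta_x^l, \delta_y^l)$ of magnitude less than $\eta/2$, chosen so that all translated corner $x$-coordinates and $y$-coordinates are pairwise distinct. Since the set of ``bad'' perturbations is a finite union of affine hyperplanes in $\mathbb{R}^{2n}$, a generic (or randomly chosen) vector works.

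To conclude, I would verify that the intersection graph is preserved: disjoint pairs remain disjoint since their original gap of at least $\epsilon$ exceeds any perturbation of magnitude $O(\eta)$, while every intersection---now robust after the first stage---survives the second. The arm lengths change by at most $\eta$, which can be taken arbitrarily small, so $d(\call)$ is essentially unchanged and no quantitative bound in the later analysis is affected. The main potential obstacle is exactly the degenerate case of single-point (tangential) intersections, where a naive perturbation could destroy an edge; the pre-processing extension step is what bypasses this obstacle by enlarging every such intersection into a robust overlap before any coordinate-level perturbation is applied.
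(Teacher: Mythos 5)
Your proposal has a genuine gap: a \emph{generic} perturbation does not preserve the intersection graph, and the configurations on which it fails are exactly the degenerate ones the claim is meant to eliminate. Concretely, take two $L_1$-shapes with $l_1.y_c=l_2.y_c=y_0$ and $l_1.x_c<l_2.x_c<l_1.x_h$, so that the corner of $l_2$ lies in the interior of the horizontal arm of $l_1$. Their horizontal arms overlap in a segment of positive length, so this pair is untouched by your tip-extension preprocessing and is covered by clause (ii) of your definition of $\epsilon$. But the overlap \emph{length} is the wrong robustness measure here: the two arms are collinear, so \emph{any} nonzero relative vertical displacement separates them, and if $l_1$ moves down relative to $l_2$ (which happens for about half of all generic perturbation vectors) the pair becomes disjoint --- $l_1$'s horizontal arm now passes below $l_2$'s corner and misses its upward vertical arm, while $l_1$'s vertical arm stays to the left of $l_2$'s horizontal arm. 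The same failure occurs for coincident corners (perturbing $l_1$ into the open upper-right quadrant of $l_2$'s corner disconnects them, despite the positive-length overlap of the arms), and your tip-touches-corner case is also only upgraded to an overlap of length $\eta$, which the subsequent translation of magnitude $\eta/2$ can again destroy. A smaller, fixable issue is that transversal crossings (interior of a horizontal arm crossing the interior of a vertical arm) are single-point intersections covered by neither clause of your $\epsilon$, so nothing in your setup bounds how much they tolerate.

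The paper's proof (given in full in \cite{vpgipl17} and reproduced in the appendix of this source) avoids exactly this trap by making the perturbation \emph{direction-aware} rather than generic: it perturbs one coordinate axis at a time, first nudging corners so that no horizontal tip shares an $x$-coordinate with another corner, and then totally ordering the $l$'s that share a common corner coordinate (by decreasing value of the \emph{other} corner coordinate, with further tie-breaks) and shifting them by increasing multiples of a carefully chosen $\gamma>0$. That ordering guarantees that when two collinear arms are pulled apart, the relative displacement goes in the unique direction that keeps the pair intersecting (the corner sitting on the other's arm slides onto the correct side), and each degenerate intersection condition is verified to be preserved case by case. If you want to salvage a perturbation-style argument, you must replace ``generic'' with such a consistently oriented choice of displacements; as written, your construction changes the graph.
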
 
We will also be making use of the following lemma (Lemma 1 of \cite{vpgipl17}) in 
the design of new approximation algorithms. 
\begin{lemma} (\cite{vpgipl17}) \label{lem:vhpermgraph}  
 Suppose $S'$ is a set of $l$'s, each being of type $L_1$. Suppose 
there exist a horizontal line $y=b$ and a vertical line $x=a$ such 
that each $l \in S'$ intersects both $y=b$ and $x=a$. 
Then, the intersection graph of members of $S'$ is a permutation graph.
\end{lemma}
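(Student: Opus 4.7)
The plan is to reduce the intersection structure of $S'$ to the inversion structure of a single permutation on the corners of its members, so that the intersection graph matches the definition of $G_{\pi}$ given in Section \ref{sec:02}.

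First, I would invoke Claim \ref{clm:assum} to assume that the corners of distinct members of $S'$ have pairwise distinct $x$-coordinates and pairwise distinct $y$-coordinates. Then, since each $l\in S'$ is $L_1$-shaped (horizontal arm extending only rightwards from the corner $c_l=(x_c,y_c)$, vertical arm extending only upwards), meeting the vertical line $x=a$ forces $x_c<a\le x_h$ (the horizontal arm is the only possible crossing), and meeting $y=b$ forces $y_c<b\le y_v$. Hence every corner lies strictly in the open southwest quadrant of $(a,b)$; in particular, every horizontal arm protrudes past $x=a$ and every vertical arm protrudes past $y=b$.

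The key step is a short case analysis of how two members $l_i,l_j$ of $S'$ (with, without loss of generality, $x_i<x_j$) can meet. Two horizontal arms cannot meet because their $y$-values differ; two vertical arms cannot meet because their $x$-values differ; the vertical arm of $l_i$ cannot meet the horizontal arm of $l_j$ because that would require $x_i\ge x_j$. The only remaining possibility is that the horizontal arm of $l_i$ meets the vertical arm of $l_j$, which happens iff $x_j\in[x_i,x_h^i]$ and $y_i\in[y_j,y_v^j]$. The upper bound $x_j\le x_h^i$ is automatic from $x_j<a\le x_h^i$, and the upper bound $y_i\le y_v^j$ is automatic from $y_i<b\le y_v^j$. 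Therefore $l_i$ and $l_j$ are adjacent if and only if $y_i>y_j$.

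To conclude, I would label the members of $S'$ as $l_1,\ldots,l_k$ in increasing order of corner $x$-coordinate and define $\pi$ on $[k]$ by letting $\pi(i)$ be the rank of $y_i$ among all the $y$-coordinates. The previous step then says that, for $i<j$, $l_i$ and $l_j$ are adjacent iff $\pi(i)>\pi(j)$, which is exactly the inversion condition defining an edge of $G_{\pi^{-1}}$ in the sense of Section \ref{sec:02}. Since the class of permutation graphs is closed under $\pi\mapsto\pi^{-1}$, the intersection graph of $S'$ is a permutation graph. The delicate point is the case analysis; once the hypothesis that each $l$ meets both $x=a$ and $y=b$ is used to eliminate three of the four geometric intersection conditions, collapsing the remaining one to the single comparison $y_i$ vs.\ $y_j$, the permutation-graph conclusion is immediate.
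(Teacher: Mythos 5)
Your proof is correct and is essentially the argument the paper relies on: the lemma is only cited here from \cite{vpgipl17}, but the proof of Lemma \ref{eqlem2} explicitly states that ``the two permutations specifying the input are the two increasing orders formed by the $l.x_c$ and $l.y_c$ values,'' which is exactly your construction (order corners by $x$-coordinate, take ranks of $y$-coordinates, and show adjacency coincides with inversion via the four-way arm case analysis). The only cosmetic point is that intersecting $x=a$ gives $x_c\le a\le x_h$ rather than $x_c<a$ (the vertical arm could lie on the line), but this does not affect any step of your case analysis.
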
 
We will also be making use of the following fact from \cite{kim1990}. 
\begin{theorem} \cite{kim1990} \label{thm:permgrMIS}
Given an arbitrary permutation graph $G$ (in the form of two 
permutations defining $G$), a MIS of $G$ can be computed in 
$O(n(\log n))$ time where $n=|V(G)|$. 
\end{theorem}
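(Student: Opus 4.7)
The plan is to reduce MIS on a permutation graph to the longest increasing subsequence (LIS) problem, which admits a classical $O(n \log n)$-time algorithm via patience sorting. Starting from the definition $G_\pi = ([n], E)$ where $E$ is the set of inversions of $\pi$, I would observe that a subset $S = \{i_1 < i_2 < \cdots < i_k\} \subseteq [n]$ is independent in $G_\pi$ iff for every pair $i_a < i_b$ drawn from $S$ we have $\pi^{-1}(i_a) < \pi^{-1}(i_b)$ (otherwise $(i_a,i_b)$ would be an inversion and hence an edge). Equivalently, considering the sequence $\sigma = (\pi^{-1}(1), \pi^{-1}(2), \ldots, \pi^{-1}(n))$, the set $S$ is independent iff the subsequence of $\sigma$ indexed by $S$ is strictly increasing. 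Thus $\alpha(G_\pi)$ equals the length of a longest strictly increasing subsequence of $\sigma$, and any such LIS directly identifies an actual MIS.

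When the input is specified by two permutations $\pi_1, \pi_2$ (the more general presentation named in the statement, where an edge is present iff the two orderings disagree on the pair), I would compose them in $O(n)$ time into a single permutation $\pi = \pi_2 \circ \pi_1^{-1}$, reducing to the single-permutation case above. This ``two-permutations to one-permutation'' conversion is $O(n)$, so it does not affect the final complexity.

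Next I would invoke the patience-sorting LIS algorithm: scan $\sigma$ left to right, maintaining the tails of the currently best candidate increasing subsequences in a sorted array; each new element is inserted in its correct slot via binary search in $O(\log n)$ time, giving an overall $O(n \log n)$ bound. Storing back-pointers alongside each ``pile'' lets us reconstruct an actual LIS (and hence an actual MIS, by mapping those positions back through $\pi^{-1}$) in $O(n)$ additional time.

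The main obstacle I anticipate is not algorithmic but bookkeeping: one must verify carefully that the reduction from the two-permutation input to the sequence $\sigma$ preserves the intersection/adjacency structure of $G$, and that the strict-versus-non-strict distinction in LIS is harmless here (entries of $\sigma$ are all distinct since $\pi$ is a permutation, so the two variants coincide). Once these verifications are in place, the $O(n \log n)$ bound follows immediately from the standard analysis of patience sorting.
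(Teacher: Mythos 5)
Your reduction of MIS on $G_\pi$ to a longest increasing subsequence computation on $\sigma=(\pi^{-1}(1),\ldots,\pi^{-1}(n))$, solved by patience sorting with back-pointers in $O(n\log n)$ time, is correct and is exactly the standard argument behind the cited result of \cite{kim1990}; the paper itself gives no proof, importing the theorem as a black box. Your handling of the two-permutation input by composing into a single permutation, and your remark that strictness is a non-issue since the entries of $\sigma$ are distinct, are both sound, so nothing further is needed.
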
 
We will also be making use of the following claim. 
\bclm \label{claim : unit-leng} 
Without loss of generality, assume that input $\call$ satisfies 
$d_{min}=2$. 
\eclm 
\bprf (sketch:) Rescale the the coordinates of $x$-axis and $y$-axis 
by stretching both of them by a multiplicative factor of 
$2/d_{min}$.     
\eprf  
The algorithm begins by dividing the input set $\call$ into disjoint sets 
$S_1,S_2,$ $\ldots,S_{\lfloor \log 2d \rfloor}$ where 
$S_i=\{l\in S \; | \; 2^i \leq le(l)< 2^{i+1}\}$, 
$\forall i \in [\lfloor \log 2d \rfloor]$. This split is 
to exploit the fact that $d(S_i) \leq 2$, for any $i$. 
We further partition each $S_i$ into nine subsets as follows. 

For the $i^{th}$ set, we do the following. 
We place a sufficiently large but finite grid structure on the 
plane covering all members of $S_i$. 
The grid is chosen in such a way so that grid-length in each of the 
$x$ and $y$ directions is $2^i$. What we get is a rectangular array of
square boxes of side length $2^i$ each. We number the rows of boxes 
from bottom and the columns of boxes from left, with numbers $0,1, \ldots$. 
  
We denote a \emph{box} by $(r',c')$ if it is in the intersection of 
${r'}^{th}$ row and ${c'}^{th}$ column. We say $l$ {\em lies inside} a box 
if its corner $c_l$ lies either in the interior or on the boundary of the box, 
except that it should not lie either on its top horizontal boundary  
or on its right vertical boundary. 
If $l$ lies inside a box $(r',c')$ we denote it by $l \in (r',c')$. 

For every  $k_r,k_c \in \{0,1,2\}$,  
define $$S_{i,k_r,k_c}=\{l\in (r',c') \; | \; 
r' \cong k_r  \; {\rm  mod } \; 3, \; \; c'\cong k_c \; {\rm mod} \;  3\}.$$ 
For a pictorial representation of the partition of $S_i$ 
into 9 subsets, see the figure below. 
\begin{figure}[h]
\centering
 \includegraphics[scale=0.75]{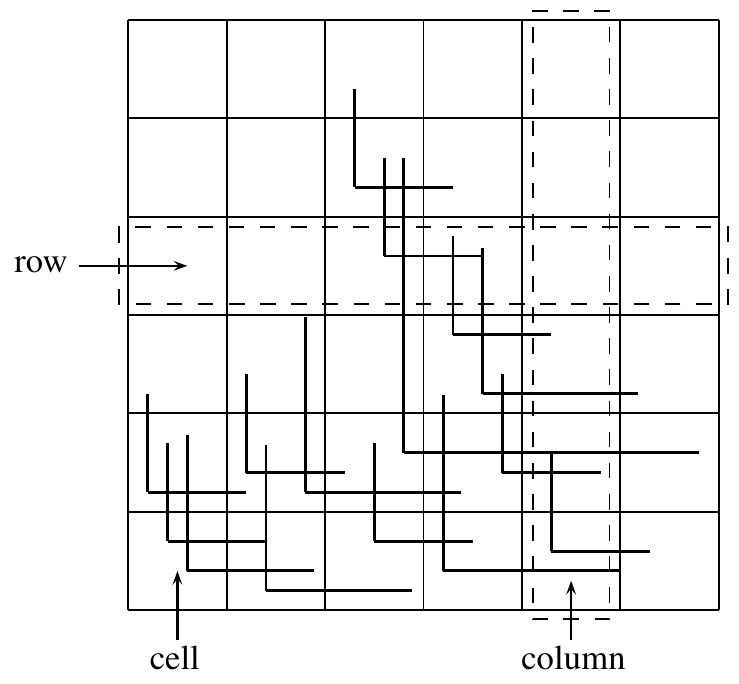}
\caption{The grid is for $l$'s of type $1$ whose lengths lie within the range $[2^i, 2^{i+1})$.}
\label{fig:01}
\end{figure} 

Thus, we partition in the input into induced 
subgraphs $G_1, \ldots ,G_D$ where $D=9 \logtd$.  
Each $G_j$ is a subgraph induced by $S_{i,k_r,k_c}$ for some $i \in 
[\lfloor \log 2d \rfloor]$ and $k_r,k_c \in \{0,1,2\}$. 
In Lemma \ref{eqlem2}, we establish that a MIS can be computed 
in efficiently for the intersection graph  
$G(S_{1,0,0})$ induced by $S_{1,0,0}$ and hence, 
by symmetry, a MIS can be computed for 
each of the $9(\lfloor \log 2d \rfloor)$ 
induced subgraphs.  
More precisely, for each of the $9(\lfloor \log 2d \rfloor)$  induced subgraphs, 
a MIS can be computed in $O(p(\log p)^2)$ time, where $p$ represents 
the number of vertices in the respective induced subgraph, leading to 
an $O(n(\log n)^2)$ overall running time. 
We choose the largest of these $9(\lfloor \log 2d \rfloor)$  independent sets 
and return it as the output. Since $\alpha(G) \leq \sum_{1 \leq i \leq D} \alpha(G_i)$, 
we deduce that the algorithm just outlined above returns an independent set of 
size at least $\alpha(G)/9 \logtd$ for an arbitrary equilateral $L$-graph $G$. 
Now, combining this observation with Claim \ref{clm:Lsuffices}, 
we deduce Theorem \ref{thm:aprx-mis-b1-vpg}. 

It now remains only to prove that a MIS can be efficiently computed for 
$G(S_{1,0,0})$,  
as stated in Lemma \ref{eqlem2} below. 
\begin{lemma}\label{eqlem2}
 A MIS can be computed in $O(p(\log p)^2)$ time for $G(S_{1,0,0})$. 
 Here, $p=|S_{1,0,0}|$.
\end{lemma}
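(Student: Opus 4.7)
The plan is to establish that $G(S_{1,0,0})$ decomposes as a disjoint union of permutation graphs, one per grid box containing corners, so that Theorem~\ref{thm:permgrMIS} can be applied component by component.

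By Claim~\ref{claim : unit-leng} I may assume $d_{min}=2$, so that every $l \in S_1$ has $le(l) \in [2,4)$ and the level-$1$ grid uses cells of side $2$. A member $l \in S_{1,0,0}$ has its corner in some box $(3b, 3a)$, and hence $x_c \in [6a, 6a+2)$ and $y_c \in [6b, 6b+2)$. Because $le(l) < 4$, the horizontal arm of such an $l$ is confined to the vertical strip $x \in [6a, 6(a+1))$ and its vertical arm to the horizontal strip $y \in [6b, 6(b+1))$.

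The first step is to check geometric disconnection across distinct boxes. If $l_1, l_2 \in S_{1,0,0}$ have corners in boxes with differing column indices $a_1 < a_2$, then every point of $l_1$ has $x$-coordinate strictly less than $6(a_1+1) \leq 6a_2 \leq x_c(l_2)$, which is no larger than the $x$-coordinate of any point of $l_2$; hence $l_1 \cap l_2 = \emptyset$. An identical argument handles the case of differing row indices. So the only intersections in $G(S_{1,0,0})$ occur between members sharing a common box, and $G(S_{1,0,0})$ is the disjoint union of the box-induced subgraphs.

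The second step invokes Lemma~\ref{lem:vhpermgraph}. Fix a box $(3b, 3a)$ and let $S' \subseteq S_{1,0,0}$ denote the members with corner there. For each $l \in S'$, the vertical arm crosses the horizontal line $y = 6b + 2$ (since $y_c < 6b+2 \leq y_c + le(l)$, using $le(l) \geq 2$), and, symmetrically, the horizontal arm crosses the vertical line $x = 6a+2$. By Lemma~\ref{lem:vhpermgraph} the intersection graph on $S'$ is therefore a permutation graph; its two defining permutations can be read off by sorting the members of $S'$ along these two lines.

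Finally, applying Theorem~\ref{thm:permgrMIS} to each box yields a MIS in $O(p' \log p')$ time, and a MIS of $G(S_{1,0,0})$ is the union of the per-box solutions by the disconnection proved in Step~1. The initial bucketing of $S_{1,0,0}$ into boxes (via a single sort) costs $O(p \log p)$, and summing the per-box costs is also $O(p \log p)$, comfortably within the claimed $O(p(\log p)^2)$ bound. The main obstacle is the first step: one must be precise about the $2$-unit cell size, the $3$-in-$9$ thinning giving a $6$-unit pitch and a $4$-unit buffer between chosen boxes, and the two possible modes of intersection (a corner lying on an arm, and a horizontal arm crossing a vertical arm), so as to verify that neither can span the buffer.
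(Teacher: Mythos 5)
Your proof is correct and follows essentially the same route as the paper's: cross-box independence of members of $S_{1,0,0}$, the observation that every $l$ in a fixed box crosses both the line $x=6a+2$ and the line $y=6b+2$ so that Lemma~\ref{lem:vhpermgraph} applies per box, and then Theorem~\ref{thm:permgrMIS} on each box. You simply make the arithmetic behind the ``seen immediately'' disconnection step explicit, which the paper omits.
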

\begin{proof} 
 Recall that $S_{1,0,0}=\{l \in S_1 : l \in (r',c'), r',c' \cong 0 \; {\rm  mod } \; 3\}$. 
From the definitions of boxes given above, the following can be seen immediately : 
if $r_1,r_2,c_1,c_2 \cong 0 \; {\rm mod} \; 3$ and $(r_1,c_1) \not= (r_2,c_2)$, 
then for any $l_1 \in (r_1,c_1)$ and any 
$l_2 \in (r_2,c_2)$, $l_1$ and $l_2$ are independent. Hence, computing a 
MIS for $H=G(S_{1,0,0})$ reduces to computing, for each $(r,c)$ such that 
$r,c \cong 0 \; {\rm mod} \; 3$, a MIS for the subgraph 
of $H$ induced by those $l \in (r,c)$. Since, for any such $(r,c)$, 
each $l \in (r,c)$ intersects the vertical line forming the left-border 
of the $(c+1)$-th column of boxes as well as the horizontal line 
forming the bottom-border of the $(r+1)$-th row of boxes, we 
can deduce, by applying Lemma \ref{lem:vhpermgraph} of \cite{vpgipl17} 
stated before, that the subgraph of $H$ induced by $l \in (r,c)$ forms a 
permutation graph. 

Hence, by applying Theorem \ref{thm:permgrMIS}, 
it follows that a MIS can be computed in $O(q(\log q)^2)$ 
time where $q=|\{l \in S_1 \; : \;  l \in (r,c)\}|$, provided 
the subgraph induced by $l \in (r,c)$ is specified in the form of two 
permutations defining it. It follows from the proof of Lemma \ref{lem:vhpermgraph}
that the two permutations specifying the input are 
the two increasing orders formed by the $l.x_c$ and $l.y_c$ values of its members. 
These two orders can be computed in $O(q(\log q))$ time. 
 \end{proof}

 \section{Generalization to arbitrary $B_1$-VPG graphs} 
 \label{sec:04}. 

 \noindent{\bf Proof of Theorem \ref{thm:aprx-mis-b1-vpg-gen} :} 
 
    ~~~~  
 
 The broad approach is the same as for equilateral graphs except that 
 we partition the members based on their lengths in each of the vertical and 
 horizontal directions independently. This independent partitioning was not 
 needed for the equilateral case since lengths are the same for both arms of 
 any $\ell$. 
 
 As earlier, we assume (without loss of generality) that $d^x_{min}=2$ and 
 $d^y_{min}=2$. 
 We divide horizontal lengths lying in $[d^x_{min}, d^x_{max}]$ into groups 
 $[2^{i}, 2^{i+1})$ for $i \in [\lfloor \log 2d_x\rfloor]$ and also  
 divide vertical lengths lying in $[d^y_{min}, d^y_{max}]$ into groups 
 $[2^{i}, 2^{i+1})$ for $j \in [\lfloor \log 2d_y\rfloor]$. 
 Using these two partitions, we divide members of $\call$ into 
 sets $S_{i,j}$ with each $S_{i,j}$ consisting of those members of $\call$ 
 whose horizontal and vertical lengths lie in groups $[2^i, 2^{i+1})$ and 
 $[2^j, 2^{j+1})$ respectively. 

 We further divide each of these $(\lfloor \log 2d_x\rfloor)(\lfloor \log 2d_y\rfloor)$ 
 groups $S_{i,j}$ into 
 9 smaller groups $S_{i,j,k_r, k_c}$ for $k_r,k_c \in \{0,1,2\}$ by imposing a rectangular 
 grid structure with grid-points being separated by lengths $2^i$ and $2^j$ in the 
 horizontal and vertical directions respectively. The remaining details are as before leading 
 to an algorithm running in $O(n(\log n)^2)$ time and producing a  
 $36(\log 2d_x)(\log 2d_y)$ MIS-approximation. 
 
 \section{Proofs of Corollaries \ref{cor:mis-aprx-unit-b1-vpg} and \ref{cor:mis-aprx-unit-b1-vpg-gen}} 
 \label{sec:proofsofcorollaries} 

 \noindent{\bf Proof sketch of Corollary \ref{cor:mis-aprx-unit-b1-vpg} :} 
 
     ~~~ 
 
   The algorithm is essentially the same as the one described in the proof of 
   Theorem \ref{thm:aprx-mis-b1-vpg} except for the following changes. 
   When all equilateral $l$'s in the input $\call$ have uniformly the 
   same arm length $a$, we have only one group instead of $(\lfloor \log 2d \rfloor)$
groups we had in the proof of Theorem \ref{thm:aprx-mis-b1-vpg}. In this case, 
it suffices to impose a finite grid structure whose side lengths (both horizontal 
and vertical) are $a$. After numbering the rows and columns of boxes, we partion 
the input into 4 subsets $S_{k_r,k_c}$ with $k_r,k_c \in \{0,1\}$ where, as before, 
$$S_{k_r,k_c} = \{l\in (r',c') \; | \; 
r' \cong k_r  \; {\rm  mod } \; 2, \; \; c'\cong k_c \; {\rm mod} \;  2\}.$$ 
For each $k_r,k_c \in \{0,1\}$, MIS can be computed exactly for the subgraph 
induced by $S_{k_r,k_c}$. The reason is the same as before : 
 for any $(r_1,c_1) \neq (r_2,c_2)$ satisfying 
$r_1,r_2 \cong k_r  \; {\rm  mod } \; 2,$ $ c_1,c_2 \cong k_c \; {\rm mod} \;  2$
and for any $l_1 \in (r_1,c_1)$ and $l_2 \in (r_2,c_2)$, $l_1$ and $l_2$ are 
independent and hence it reduces to computing, for each such $(r,c)$, exactly an  MIS for 
the subgraph induced by those $l \in (r,c)$ and this can be realized in $O(p(\log p)^2)$ time 
as explained before, where $p=|\{l \in (r,c)\}|$. The largest of the 4 MIS's (one for each 
$S_{k_r,k_c}$) is then returned as the output yielding a 4-approximation of MIS for 
the subgraph induced by $\ell$'s of Type 1. When combined with Claim \ref{clm:Lsuffices}, 
we obtain Corollary \ref{cor:mis-aprx-unit-b1-vpg}. 

   ~~~ 
   
   \noindent{\bf Proof sketch of Corollary \ref{cor:mis-aprx-unit-b1-vpg-gen} :} 
 
     ~~~ 
 
   When all equilateral $l$'s in the input $\call$ have uniformly the 
   same vertical arm length $a$ and horizontal arm length $b$, 
   we have only one group instead of $(\lfloor \log 2d_x\rfloor)(\lfloor \log 2d_y\rfloor)$
groups we had in the proof of Theorem \ref{thm:aprx-mis-b1-vpg-gen}. In this case, 
it suffices to impose a finite grid structure whose vertical and horizontal side lengths
are $a$ and $b$ respectively. After numbering the rows and columns of boxes, we partion 
the input into 4 subsets $S_{k_r,k_c}$ with $k_r,k_c \in \{0,1\}$ where, as before, 
$$S_{k_r,k_c} = \{l\in (r',c') \; | \; 
r' \cong k_r  \; {\rm  mod } \; 2, \; \; c'\cong k_c \; {\rm mod} \;  2\}.$$ 
As explained before, for each $k_r,k_c \in \{0,1\}$, MIS can be computed exactly for the subgraph 
induced by $S_{k_r,k_c}$ leading to a 16-approximation of MIS in polynomial time for 
arbitrary $B_1$-VPG graphs having a uniformly common vertical and horizontal arm lengths. 
This establishes Corollary \ref{cor:mis-aprx-unit-b1-vpg-gen}. 

\section{Conclusions}
\label{sec:05}  

In further works \cite{crsb2vpgsep16}, we have obtained further improvements on MIS approximation 
of $B_1$-VPG graphs and also for improved MIS approximation algorithms $B_2$-VPG graphs.  
It would be interesting to establish some inapproximability results for the MIS problem 
over equilateral $B_1$-VPG graphs. Also the question of obtaining better approximations in 
terms of ratios of lengths would be worth pursuing. 

 ~~~~~ 
 
  \noindent{\bf Acknowledgements :} We thank an anonymous referee (of a related 
  submission) for pointing out that 
a special type of input studied here actually induces a permutation graph, thereby 
leading to improved running time bounds.  

\bibliography{res,oth}
\bibliographystyle{elsarticle-num}

\end{document}